\newtheorem{problem}{Problem}
\newtheorem{assumption}{Assumption}
\newtheorem{proposition}{Proposition}
\newtheorem{theorem}{Theorem}
\newtheorem{definition}{Definition}
\newtheorem{remark}{Remark}
\newcommand{\R}{\mathbb{R}}
\def\BibTeX{{\rm B\kern-.05em{\sc i\kern-.025em b}\kern-.08em
    T\kern-.1667em\lower.7ex\hbox{E}\kern-.125emX}}
\begin{document}
\title{Koopman-inspired Implicit Backward Reachable Sets for Unknown Nonlinear Systems}
\author{Haldun Balim, \IEEEmembership{Student Member, IEEE}, Antoine Aspeel, \IEEEmembership{Member, IEEE}, Zexiang Liu, \IEEEmembership{Student Member, IEEE} and Necmiye Ozay, \IEEEmembership{Senior Member, IEEE}
\thanks{H.~B. is with ETH Zurich and he was a visiting scholar at the Univ. Michigan when this work was performed (e-mail: hbalim@ethz.ch). A.~A., Z.~L., and N.~O. are with 
the Electrical Engineering and Computer Science Department, Univ. of Michigan, Ann Arbor, MI (e-mails: antoinas,zexiang,necmiye@umich.edu). This work is funded by the ONR grant N00014-21-1-2431 (CLEVR-AI) and NSF Grant CNS 1931982.}}

\maketitle

\begin{abstract}
Koopman liftings have been successfully used to learn high dimensional linear approximations for autonomous systems for prediction purposes, or for control systems for leveraging linear control techniques to control nonlinear dynamics. In this paper, we show how learned Koopman approximations can be used for state-feedback correct-by-construction control. To this end, we introduce the Koopman over-approximation, a (possibly hybrid) lifted representation that has a simulation-like relation with the underlying dynamics. Then, we prove how successive application of controlled predecessor operation in the lifted space leads to an implicit backward reachable set for the actual dynamics. 
Finally, we demonstrate the approach on two nonlinear control examples with unknown dynamics.
\end{abstract}

\begin{IEEEkeywords}
reachability, constrained nonlinear control, data-driven control
\end{IEEEkeywords}

\section{Introduction}
\label{sec:introduction}
\IEEEPARstart{T}{he} goal of backward reachability analysis is to identify a set of states called the backward reachable set (BRS), which guarantees the existence of a control strategy to direct a system's trajectories towards a predetermined target region within a finite time. Having BRS can significantly simplify controller synthesis while ensuring safety. Particularly, it provides a state-feedback controller defined over the entire BRS as opposed to point-to-point planners. However, for general nonlinear systems, computation of maximal BRS is considered a challenging problem \cite{althoff2021}. To overcome this fundamental limitation, one approach is to compute inner-approximations which still guarantee existence of safe controllers at the expense of being conservative.

One strategy for addressing nonlinear control problems involves the application of Koopman operator theory to extend the use of well-studied linear system analysis~\cite{Otto2021}. Particularly, a lifting function is sought which transforms the coordinates to a higher dimensional space over which the nonlinear dynamics flows in a linear fashion. However, for an arbitrary nonlinear system, such a lifting function is in general infinite-dimensional. Therefore, in practice, it is only possible to construct finite-dimensional approximations, which in-turn introduces an approximation error.
By bounding the approximation error, we provide a novel Koopman-inspired approach to compute correct-by-construction inner approximations of the BRS of discrete-time nonlinear systems. We also show how this approach can be used for unknown systems where lifted local linear approximations to the dynamics are computed from data and used for BRS computation.

\subsection{Related work}

Creating linear approximations of nonlinear systems is a highly recognized subject in the field of systems and control. There are various approaches such as Taylor approximation based linearization, feedback linearization~\cite{Khalil:1173048} and linearization through state immersion~\cite{immersion_lin}. More recently, approaches motivated by Koopman operator theory start to draw revived attention in control research. Several ways to use Koopman lifted systems to address nonlinear control problems using linear methods are proposed in literature~\cite{korda2018linear, Brunton2022}. 
Koopman-like lifted systems are also used to identify forward reachable sets~\cite{bak2021reachability} and invariant sets of autonomous systems \cite{sankaranarayanan2016change,wang2023computation}.

Backward reachability analysis has also been extensively studied as a useful tool to solve constrained control problems. Existing techniques include set-based methods~\cite{blanchini2007,kurzhanskiy2011reach,althoff2021,liren_cz}, discrete-abstractions~\cite{girard2012controller,liu2016finite}, Hamilton-Jacobi (HJ) reachability~\cite{hjb, mitchell2005}, and more recent work for unknown dynamics~\cite{shafa2022reachability}. However, most of the existing methods cannot be extended to general nonlinear systems, especially when the state space dimension is large~\cite{hjb}. All in all, there is a trade-off between generality, scalability, and conservativeness, and new methods striking a different balance between these factors are needed.

\subsection{Notation}
For a matrix $A$, $A_i$ denotes its $i^\text{th}$ row. For a vector-valued function $f$, $f^i$ is the $i^\text{th}$ component of $f$. The Lipschitz constant of a function $f$ is denoted by $L_f$. A norm is represented by the notation $\|x\|$, when it is applied to a matrix it indicates induced norm. The notation $\mathcal{B}(c, r)$ denotes a closed norm-ball centered at point $c$ with a radius of $r$. The symbol $\oplus$ is used for Minkowski sum. When $\times$ is applied to sets it is to indicate cartesian product. 

\section{Problem Statement}
\label{sec:defn_ps}
We consider discrete-time systems of the form
\begin{align} \label{eqn:sys} 
\Sigma:\ x^+=f(x,u),
\end{align}
with state $x\in \mathcal{X}\subseteq \R^n$ and input $u\in \mathcal{U}\subseteq \R^m$. We use uncertain systems of the form 
\begin{align} \label{eqn:uncert_sys}
\Sigma_a:\ x^+=g(x,u,w)
\end{align}
as approximations of a given system $\Sigma$, where the disturbance input $w\in W\subseteq \R^l$ accounts for the mismatch. For an uncertain system, one-step backward reachable sets are defined as follows.

\begin{definition}[\cite{blanchini2007}]
Given an uncertain system $\Sigma_a$, a target set $X \subseteq \mathcal{X}$ and a state-input constraint set $S_{xu}= S_{x}\times S_{u} \subseteq \mathcal{X}\times \mathcal{U}$, the \emph{one-step backward reachable set} (BRS) $Pre_{\Sigma_a}(X, S_{xu})$ of the set $X$ with respect to $\Sigma_a$ and constraint set $S_{xu}$ is given by
\begin{align} \label{eqn:pre} 
\text{Pre}_{\Sigma_a}(X,S_{xu})&=\{x\in S_{x}\mid \exists u\in S_{u}: g(x,u,W) \subseteq X\}.
\end{align}
\end{definition}
We denote $\text{Pre}_{\Sigma}^0(X,S_{xu})=X$ and define recursively the \emph{$k$-step backward reachable set} ($k$-step BRS) of $X$ by
\begin{align} \label{eqn:pre_k} 
\text{Pre}_{\Sigma}^k(X,S_{xu})&=\text{Pre}_{\Sigma}(\text{Pre}_{\Sigma}^{k-1}(X, S_{xu}),S_{xu}).
\end{align}
The definition of BRS applies to systems in \eqref{eqn:sys} just by taking the trivial disturbance $W=\{0\}$. For linear systems and polytopic sets $X$ and $S_{xu}$, the $k$-step BRS of $X$ in \eqref{eqn:pre_k} is again a polytope, making it computable relatively efficiently \cite{koutsoukos_safety_2003, 1618830}. However, if the system $\Sigma$ is nonlinear, the $k$-step BRS can be highly nonconvex, which makes its computation challenging. The focus of this work is to find inner approximations of the $k$-step BRS for nonlinear systems.
\begin{problem} \label{prob:main}
Given a nonlinear system $\Sigma$, a polytopic target set $X_0$, and a polytopic constraint set $S_{xu}$, find inner approximations of the $k$-step BRS {$X_k=\text{Pre}_{\Sigma}^k(X_0,S_{xu})$}.

\end{problem}
In the next section, we recast Problem \ref{prob:main} into a problem of computing BRSs for uncertain linear systems with polytopic sets using ideas from Koopman operator theory.

\section{Koopman Over-approximations}

Inspired by Koopman operator theory, several recent works propose to approximate a nonlinear system $\Sigma$ by a higher-dimensional linear system. More specifically, for the nonlinear system $\Sigma$, there may exist a lifting function $\psi:\mathcal{X}\rightarrow \R^p$ for which the dynamics of $\psi(x)$ is approximately linear, i.e.,
$$
\psi(f(x,u)) \approx A\psi(x) + Bu.
$$

Given a lifting function $\psi:\mathcal{X}\rightarrow \R^p$, and the system matrices $A$ and $B$, we define the approximation error $E_{A,B}(x, u)$ by
\begin{align}\label{eq:error}
    E_{A,B}(x,u) = \psi(f(x,u)) - A\psi(x) - Bu.
\end{align}
When the lifting function $\psi$ and system matrices $A$ and $B$ are selected properly, the approximation error can be very small over a domain of interest (see examples in \cite{korda2018linear}). 
The evolution of the lifted state $z=\psi(x)$ can be captured by an uncertain linear system 
\begin{align} \label{eqn:sys_linear} 
    \Sigma_{lin}: z^+ = A z + Bu + w,
\end{align}
with $z\in \R^{p}$, $u\in \mathcal{U} \subseteq \R^{m}$ and $w \in W\subseteq \R^{p}$ if the set $W\subseteq \R^p$ bounds the error $E_{A,B}(x,u)$ over given state and input domains $\mathcal{X}_{sub}$ and $\mathcal{U}_{sub}$. 
\begin{definition} \label{def:lifting} 
  The tuple $(\psi,\Sigma_{lin})$ of a lifting function $\psi:\mathcal{X} \rightarrow \R^{p}$ and a linear system $\Sigma_{lin}$ is an \emph{Koopman over-approximation} of the system $\Sigma$ over subdomains $\mathcal{X}_{sub} \subseteq \mathcal{X}$ and $\mathcal{U}_{sub} \subseteq \mathcal{U}$ if for all $(x,u)\in \mathcal{X}_{sub}\times \mathcal{U}_{sub}$, we have 
  \begin{align} \label{eqn:over-approx} 
	E_{A,B}(x,u)\in  W. 
  \end{align}
\end{definition}
When the lifting function $\psi$ is clear from the context, we say that $\Sigma_{lin}$ is a Koopman over-approximation of $\Sigma$.
Here we do not require the subdomain $\mathcal{X}_{sub}$ to be forward invariant or controlled invariant. 
If $\mathcal{U}_{sub}$ and $W$ are all $\{0\} $, then $\psi(\cdot)$ is just a finite-dimensional Koopman eigenmapping for the autonomous system $x^+=f(x, 0)$ over $\mathcal{X}_{sub}$ \cite{mauroy2020koopman}. 

To compute the BRS of $\Sigma$ using the Koopman over-approximation, we need to map the target and constraint sets $X$ and $S_{x}$ from $\mathcal{X}$ to $\R^p$. A straightforward method is to find the images $\psi(X)$ (or $\psi(S_x)$) of $X$ (or $S_x$) with respect to $\psi$, which is computationally challenging since $\psi$ is typically nonlinear. Instead, we propose a more relaxed and flexible way to lift the sets in $\mathcal{X}$ to the higher-dimensional space $\R^p$.

\begin{definition} \label{def:implict_set}
    Given a lifting function $\psi:\mathcal{X} \rightarrow\R^{p}$, a set $Z \subseteq \R^{p}$ is a \emph{$\psi$-implicit inner approximation} of a set $X \subseteq \mathcal{X}$ if 
	   $\{ x \mid  \psi(x) \in Z\} \subseteq X.$
If these two sets are equal, then $Z$ is called a \emph{$\psi$-implicit representation} of $X$. \end{definition}

Note that implicit inner approximations (or representations) of a set may not be unique. The following assumption is made in the remainder of this work, which allows one to construct an implicit representation of any subset of $\mathcal{X}$ easily.
\begin{assumption} \label{asp:linear_inversion} 
The lifting function $\psi$ in Definition \ref{def:lifting} has a linear left inverse. That is, there exists a matrix $C \in \R^{n\times p}$ such that for all $x\in \mathcal{X}$, $C\psi(x)=x$. 
\end{assumption}
Assumption \ref{asp:linear_inversion} can be satisfied by including the states $x$ of $\Sigma$ as part of the outputs of $\psi(x)$. The following proposition shows how to utilize this assumption to construct implicit representations. 
\begin{proposition} \label{prop:implict} Under Assumption \ref{asp:linear_inversion}, for any subset $X \subseteq\mathcal{X}$, the set $Z = \{z \mid Cz\in X\} \subseteq \R^{p}$ is a $\psi$-implicit representation of $X$. In particular, if $X$ is a polytope, $Z$ gives a polytopic $\psi$-implicit representation of $X$.
\end{proposition}
\begin{proof}
By definition of $Z$, the set $\{x \mid \psi(x)\in Z\}$ is equal to $\{x \mid C\psi(x)\in X\}$, which is further equal to $ X$ since $C\psi(x) = x$ by Assumption \ref{asp:linear_inversion}.
\end{proof}
The next theorem shows how easily we can control the nonlinear system $\Sigma$ using a Koopman over-approximation $\Sigma_{lin}$ of $\Sigma$ and $\psi$-implicit representations of sets. 
\begin{theorem} \label{thm:control}
Suppose that $(\psi(x), \Sigma_{lin})$ is a Koopman over-approximation of $\Sigma$ over $\mathcal{X}_{sub} \times \mathcal{U}_{sub}$. Let $Z$ be a $\psi$-implicit inner approximation of a target set $X$. Then, for any state $x\in \mathcal{X}_{sub}$ of $\Sigma$, if there exists an input $u\in \mathcal{U}_{sub}$ such that $(A\psi(x)+Bu)\oplus W \subseteq Z$, then the same $u$ steers the next state $x^+=f(x,u)$ of $\Sigma$ to $X$.
\end{theorem}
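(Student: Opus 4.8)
The plan is to chain together the three relevant definitions; this statement is essentially a bookkeeping argument with no technical difficulty. First I would fix an arbitrary $x \in \mathcal{X}_{sub}$ together with a witnessing input $u \in \mathcal{U}_{sub}$ satisfying $(A\psi(x)+Bu)\oplus W \subseteq Z$, and set $x^+ = f(x,u)$. The key identity is to rewrite the lifted image of the next state using the definition of the approximation error in \eqref{eq:error}: since $E_{A,B}(x,u) = \psi(f(x,u)) - A\psi(x) - Bu$, we have $\psi(x^+) = A\psi(x) + Bu + E_{A,B}(x,u)$.

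Next I would invoke Definition \ref{def:lifting}: because $(\psi,\Sigma_{lin})$ is a Koopman over-approximation of $\Sigma$ over $\mathcal{X}_{sub}\times\mathcal{U}_{sub}$ and $(x,u)$ lies in that subdomain, the error satisfies $E_{A,B}(x,u) \in W$. Combining this with the identity above gives $\psi(x^+) \in \{A\psi(x)+Bu\} \oplus W \subseteq Z$, where the last inclusion is exactly the hypothesis on $u$.

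Finally I would apply Definition \ref{def:implict_set}: since $Z$ is a $\psi$-implicit inner approximation of $X$, the set $\{x \mid \psi(x)\in Z\}$ is contained in $X$; as $\psi(x^+) \in Z$, this yields $x^+ \in X$, which is the claim. The only point requiring any care — and it is handled automatically by the way the hypotheses are stated — is the domain matching: the error bound from the Koopman over-approximation is only valid on $\mathcal{X}_{sub}\times\mathcal{U}_{sub}$, so one must use that $x$ is assumed in $\mathcal{X}_{sub}$ and the witnessing $u$ in $\mathcal{U}_{sub}$. I do not anticipate any genuine obstacle.
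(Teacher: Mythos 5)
Your proposal is correct and follows exactly the same route as the paper's proof: expand $\psi(f(x,u))$ via the error identity, use the over-approximation condition $E_{A,B}(x,u)\in W$ to place $\psi(x^+)$ in $(A\psi(x)+Bu)\oplus W\subseteq Z$, and conclude $x^+\in X$ from the implicit inner approximation. No differences worth noting.
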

\begin{proof}
Suppose that there exists $(x,u)\in \mathcal{X}_{sub} \times \mathcal{U}_{sub}$ such that $(A\psi(x)+Bu) \oplus W \subseteq Z$. By the definition of Koopman over-approximation, $\psi(f(x,u))\in (A\psi(x)+Bu) \oplus W \subseteq Z$, which further implies that $f(x,u)\in X$ since $Z$ is a $\psi$-implicit inner approximation of $X$.
\end{proof}
We call the set of inputs $u\in \mathcal{U}_{sub}$ in Proposition \ref{thm:control} such that $(A\psi(x)+Bu)\oplus W \subseteq Z$ the \emph{admissible input set} $\mathcal{A}(x,Z)$ of $x$ with respect to the implicit target set $Z$. Under Assumption \ref{asp:linear_inversion}, when $X$ and $W$ are polytopes, $\mathcal{A}(x,Z)$ can be easily computed via standard polytope operations, thanks to Proposition \ref{prop:implict}. The next theorem draws a connection between the one-step BRS for $\Sigma$ and that for $\Sigma_{lin}$.
\begin{theorem} \label{thm:BRS} 
	Let $(\psi,\Sigma_{lin})$ be a Koopman over-approximation of $\Sigma$ over $ S_{xu}=S_x\times S_u$. If $Z$ and $S_{z}$ are $\psi$-implicit inner approximations of $X$ and ${S}_x$, respectively, the one-step BRS $\text{Pre}_{\Sigma_{lin}}(Z, S_{zu})$, with $S_{zu}=S_{z}\times {S}_u$, is a $\psi$-implicit inner approximation of the one-step BRS $\text{Pre}_{\Sigma}(X, S_{xu})$. That is,
\begin{align}\label{eqn:implicit_x}
   \{x \mid \psi(x) \in \text{Pre}_{\Sigma_{lin}}(Z, S_{zu})\} \subseteq \text{Pre}_{\Sigma}(X, S_{xu}).  
\end{align}
\end{theorem}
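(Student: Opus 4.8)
The plan is to take an arbitrary $x$ in the left-hand set of \eqref{eqn:implicit_x}, i.e.\ a state with $\psi(x)\in\text{Pre}_{\Sigma_{lin}}(Z,S_{zu})$, and to exhibit a witnessing input showing $x\in\text{Pre}_{\Sigma}(X,S_{xu})$. Unwinding \eqref{eqn:pre} applied to the uncertain linear system $\Sigma_{lin}$ in \eqref{eqn:sys_linear}, the membership $\psi(x)\in\text{Pre}_{\Sigma_{lin}}(Z,S_{zu})$ means precisely two things: (i) $\psi(x)\in S_z$, and (ii) there exists $u\in S_u$ with $A\psi(x)+Bu+W\subseteq Z$, equivalently $(A\psi(x)+Bu)\oplus W\subseteq Z$.

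First I would use (i): since $S_z$ is a $\psi$-implicit inner approximation of $S_x$, Definition~\ref{def:implict_set} gives $x\in S_x$. Consequently the pair $(x,u)$ lies in $S_x\times S_u=S_{xu}$, which is exactly the subdomain over which $(\psi,\Sigma_{lin})$ is assumed to be a Koopman over-approximation of $\Sigma$. Hence Theorem~\ref{thm:control}, instantiated with $\mathcal{X}_{sub}=S_x$ and $\mathcal{U}_{sub}=S_u$, applies: from (ii) the input $u$ steers the next state of $\Sigma$ to the target, i.e.\ $f(x,u)\in X$. Taking the trivial disturbance $W=\{0\}$ in \eqref{eqn:pre} for the deterministic system $\Sigma$, the pair $(x,u)$ with $x\in S_x$, $u\in S_u$ and $f(x,u)\in X$ is exactly a certificate that $x\in\text{Pre}_{\Sigma}(X,S_{xu})$. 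Since $x$ was arbitrary, the inclusion \eqref{eqn:implicit_x} follows, and therefore $\text{Pre}_{\Sigma_{lin}}(Z,S_{zu})$ is a $\psi$-implicit inner approximation of $\text{Pre}_{\Sigma}(X,S_{xu})$ by Definition~\ref{def:implict_set}.

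I do not anticipate a genuine obstacle here; the proof is a direct chain of set inclusions built from the definitions together with Theorem~\ref{thm:control}. The only point requiring care is the order of the argument: one must extract $x\in S_x$ from the implicit state constraint (i) \emph{before} invoking Theorem~\ref{thm:control}, since the hypothesis of that theorem is stated for states in the subdomain $\mathcal{X}_{sub}=S_x$; performing the steps in the wrong order would leave a gap. No new estimates or constructions beyond Proposition~\ref{prop:implict} and Theorem~\ref{thm:control} are needed. Alternatively, one could avoid citing Theorem~\ref{thm:control} and inline its one-line argument, using $\psi(f(x,u))\in(A\psi(x)+Bu)\oplus W\subseteq Z$ and that $Z$ is a $\psi$-implicit inner approximation of $X$; reusing the theorem merely keeps the exposition shortest.
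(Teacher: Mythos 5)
Your proposal is correct and follows essentially the same route as the paper's proof: unpack membership in $\text{Pre}_{\Sigma_{lin}}(Z,S_{zu})$ into the state constraint and the existence of a witnessing input, deduce $x\in S_x$ from the implicit inner approximation $S_z$, and use the over-approximation property to conclude $\psi(f(x,u))\in(A\psi(x)+Bu)\oplus W\subseteq Z$, hence $f(x,u)\in X$. The only cosmetic difference is that you cite Theorem~\ref{thm:control} for the last step while the paper inlines that one-line argument, which you also note as an alternative.
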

\begin{proof}
Let $x_0$ be such that $\psi(x_0)\in\text{Pre}_{\Sigma_{lin}}(Z, S_{zu})$. Then (i) $\psi(x_0)\in S_z$ which implies $x_0\in {S}_x$; and (ii) there exists $u_0\in {S}_u$ such that $(A\psi(x_0)+Bu_0)\oplus W\subseteq Z$. Because $(\psi,\Sigma_{lin})$ is a Koopman over-approximation of $\Sigma$, then $\psi(f(x_0,u_0))\in (A\psi(x_0)+Bu_0)\oplus W$ which is a subset of $Z$. This shows that $x_0\in\text{Pre}_{\Sigma}(X, S_{xu})$. Since $x_0$ is arbitrary in $\{x \mid \psi(x) \in \text{Pre}_{\Sigma_{lin}}(Z, S_{zu})\}$, the proof is complete.
\end{proof}
By construction, every state $x$ in the set on the LHS of \eqref{eqn:implicit_x} has a nonempty admissible input set $\mathcal{A}(x,Z)$, which can be easily extracted under Assumption \ref{asp:linear_inversion}. 
Moreover, according to Proposition \ref{prop:implict}, the polytopic sets $X$ and $S_x$ in Problem \ref{prob:main} have polytopic implicit representations
$$
Z_0=\{Cx \mid x\in X\},\ S_z=\{Cx \mid x\in S_x\}.
$$Then, by applying Theorem \ref{thm:BRS} recursively, the $k$-step BRS $Z_k = \text{Pre}_{\Sigma_{lin}}^k(Z, S_{zu})$ of $Z$ provides a polytopic inner approximation of the $k$-step BRS $X_k=\text{Pre}_{\Sigma}(X, S_{xu})$. Their relationship is illustrated in Fig. \ref{fig:main}. 

So far, we assume that there exists a single Koopman over-approximation $\Sigma_{lin}$ over $S_{xu}$. 
Similar to local linearizations in gain-scheduled control, one can also find a set of local Koopman over-approximations $\{\Sigma_{lin,i}\}_{i=1}^{N_k}$ (with the same lifting function $\psi$) at \emph{each} step, where each $\Sigma_{lin,i}$ over-approximates $\Sigma$ over a subdomain $S_{x,i}\times S_{u,i}$ of $S_{xu}$. Then, Theorem \ref{thm:BRS} can be easily extended to show that if $Z_{k-1}$ is an $\psi$-implicit inner approximation of $X_{k-1}$, the union $Z_{k}$ of the one-step BRSs $\text{Pre}_{\Sigma_{lin,i}}(Z_{k-1}, S_{zu,i})$\footnote{Here $S_{zu,i}=S_{z,i}\times S_{u,i}$ with $S_{z,i}$ a $\psi$-implicit representation of $S_{x,i}$.} for all $i$ is a $\psi$-implicit inner approximation of the $k$-step BRS $X_k$ of $\Sigma$. That is,
\begin{align*}
\{x \mid \psi(x)\in \cup_{i=1}^{N_k} \text{Pre}_{\Sigma_{lin,i}}(Z_{k-1}, S_{zu,i})\}\subseteq X_{k}.
\end{align*} 
 
 Later, our numerical examples show that using local Koopman over-approximations allows us to compute much larger BRSs, while as a cost, if $N_k>1$, the $k$-step implicit BRS $Z_k$ needs to be represented by a union of polytopes (instead of a single polytope).

\section{Computational Approach for Finding Koopman Over-approximations}
In this section, we first discuss how to obtain global Koopman over-approximations. When the system dynamics is unknown, we discuss how one may use data to obtain such over-approximations. Then, we provide a method to efficiently find local Koopman over-approximations based on the obtained global Koopman over-approximations.

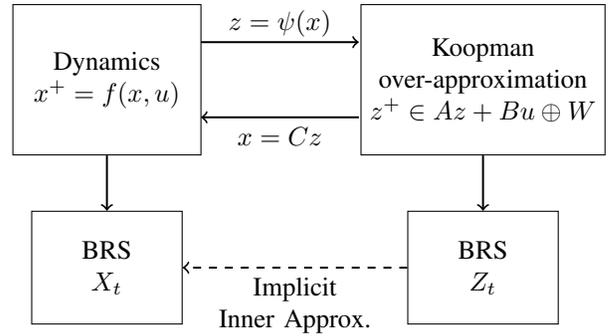
\begin{figure}[thb]
\begin{center}
\begin{tikzpicture}[node distance=3cm]
    \node [rectangle, draw, minimum width=2.5cm, minimum height=2cm, align=center] (rect1) at (0,0) {Dynamics \\ $x^+ = f(x,u)$};
    \node [rectangle, draw, minimum width=2.5cm, minimum height=2cm, align=center] (rect2) at (5,0) {Koopman \\over-approximation \\ $z^+ \in Az+Bu\oplus W$};
    \node [rectangle, draw, minimum width=2cm, minimum height=1.5cm, align=center] (rect3) at (0,-2.5) {BRS\\ $X_t$};
    \node [rectangle, draw, minimum width=2cm, minimum height=1.5cm, align=center] (rect4) at (5,-2.5) {BRS\\ $Z_t$};

    \node[] at (2.3, 0.75) {$z=\psi(x)$};
    \node[] at (2.3, -0.75) {$x = Cz$};
    \node[align=center] at (2.5, -3) {Implicit \\ Inner Approx.};
    
    \draw [->, line width=0.75] (1.25, 0.5) to (3.35, 0.5);
    \draw [->, line width=0.75] (3.35, -0.5) to (1.25, -0.5);
    \draw [->, line width=0.75] (rect1) -- (rect3);
    \draw [->, line width=0.75] (rect2) -- (rect4);

    \draw [dashed, ->,line width=0.75] (rect4) to (rect3);
    
    \end{tikzpicture}
\end{center}
\caption{Relation between different systems and their BRSs.}
\label{fig:main}
\end{figure}

\subsection{Computing Koopman over-approximations}
\label{subsec:koop_rep}

Trivially, if we can bound the error in \eqref{eq:error} in the entire constraint set, i.e., for all $(x,u)\in S_{xu}$, $E_{A,B}(x,u)\in \mathcal{B}(c,\epsilon)$, then the linear system $\Sigma_{lin}$ defined by $A$, $B$ and $W=\mathcal{B}(c,\epsilon)$ is a Koopman over-approximation of $\Sigma$ over $S_{xu}$. Note that such a bounding ball always exists if $f$ and $\psi$ are continuous and $S_{xu}$ is compact (see Theorem 4.15 in \cite{rudin1976principles}). Moreover, it can be computed analytically or using optimization techniques (e.g., \cite{boyd2004convex}) when the dynamics of $\Sigma$ is known.

Let us now assume the dynamics is unknown and we are given a finite data set $\mathcal{D} = \{({x}_i,{u}_i,{x}_i^+)\}_{i=1}^N$ where $(x_i,u_i)$ is sampled in $S_{xu}$ and ${x}_i^+=f({x}_i,{u}_i)$. We denote by $\mathcal{D}_{xu}$, the state-input pairs in the data set $\mathcal{D}$. Let us further assume that we are given a valid Lipschitz constant $L_{E_{A,B}}$ for the error function $E_{A,B}$. We use the dispersion of the data set together with the Lipschitz constant of the error to evaluate the error bound.

\begin{definition}[\cite{lavalle2006planning}] \label{defn:dispersion}
Given the constraint set $S_{xu}$ and data set $\mathcal{D}_{xu}$, the \emph{dispersion} $b$ of $\mathcal{D}_{xu}$ in $S_{xu}$ is defined by $$b=\sup_{(x,u)\in S_{xu}} \min_{(\bar{x},\bar{u})\in\mathcal{D}_{xu}} \| (x,u) - (\bar{x},\bar{u})\|.$$
\end{definition}
Note that we do not require $\mathcal{D}_{xu}\subseteq S_{xu}$.
The next theorem translates the error on the data set to an error bound.

\begin{theorem} \label{thm:global_oaka}
Given a constraint set $S_{xu}$, a data set $\mathcal{D}$ for which $\mathcal{D}_{xu}$ has dispersion $b$ in $S_{xu}$, a lifting function $\psi$, and matrices $A$, $B$ of compatible dimensions, and consider the Lipschitz constant $L_{E_{A,B}}$. Define
$$e(A,B)=\min_c \max_{(\bar{x},\bar{u},\bar{x}^+)\in\mathcal{D}}\|\psi(\bar{x}^+) - A\psi(\bar{x})-B\bar{u}-c\|,$$
with $c^*$ as the minimizer.
Let $\epsilon=L_{E_{A,B}}b+e(A,B)$. Then the linear system $\Sigma_{lin}$ defined by $A$, $B$, and $W=\mathcal{B}(c^*,\epsilon)$ is a Koopman over-approximation of $\Sigma$ over the domain $S_{xu}$.
\end{theorem}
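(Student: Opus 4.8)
The plan is to show that for an arbitrary $(x,u)\in S_{xu}$, the error $E_{A,B}(x,u)$ lies in the ball $\mathcal{B}(c^*,\epsilon)$, which by Definition~\ref{def:lifting} is exactly what is needed for $\Sigma_{lin}$ to be a Koopman over-approximation. The natural decomposition is to compare $E_{A,B}(x,u)$ against $E_{A,B}(\bar x,\bar u)$ for the data point $(\bar x,\bar u)\in\mathcal{D}_{xu}$ nearest to $(x,u)$: write $E_{A,B}(x,u)-c^* = \big(E_{A,B}(x,u)-E_{A,B}(\bar x,\bar u)\big) + \big(E_{A,B}(\bar x,\bar u)-c^*\big)$ and bound the two pieces separately by the triangle inequality.

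First I would invoke the definition of dispersion (Definition~\ref{defn:dispersion}): there exists $(\bar x,\bar u)\in\mathcal{D}_{xu}$ with $\|(x,u)-(\bar x,\bar u)\|\le b$. Then the Lipschitz bound on $E_{A,B}$ gives $\|E_{A,B}(x,u)-E_{A,B}(\bar x,\bar u)\|\le L_{E_{A,B}}\,\|(x,u)-(\bar x,\bar u)\|\le L_{E_{A,B}} b$. For the second piece, since $(\bar x,\bar u,\bar x^+)\in\mathcal{D}$ and $\bar x^+ = f(\bar x,\bar u)$, we have $E_{A,B}(\bar x,\bar u) = \psi(\bar x^+) - A\psi(\bar x) - B\bar u$, so by the definition of $e(A,B)$ with minimizer $c^*$, $\|E_{A,B}(\bar x,\bar u)-c^*\| \le \max_{(\bar x,\bar u,\bar x^+)\in\mathcal{D}}\|\psi(\bar x^+)-A\psi(\bar x)-B\bar u - c^*\| = e(A,B)$. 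Adding the two bounds yields $\|E_{A,B}(x,u)-c^*\|\le L_{E_{A,B}} b + e(A,B) = \epsilon$, i.e., $E_{A,B}(x,u)\in \mathcal{B}(c^*,\epsilon) = W$. Since $(x,u)$ was arbitrary in $S_{xu}$, Definition~\ref{def:lifting} is satisfied and $(\psi,\Sigma_{lin})$ is a Koopman over-approximation of $\Sigma$ over $S_{xu}$.

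The argument is essentially a covering/interpolation bound, so there is no deep obstacle; the one point requiring a little care is making sure the nearest data point $(\bar x,\bar u)$ furnished by the dispersion bound is actually a point whose successor $\bar x^+$ appears in $\mathcal{D}$ — this is fine because $\mathcal{D}_{xu}$ is by definition the projection of $\mathcal{D}$, so every $(\bar x,\bar u)\in\mathcal{D}_{xu}$ comes with a recorded $\bar x^+ = f(\bar x,\bar u)$. A second minor subtlety is that the dispersion is a supremum, so strictly speaking one should handle the case where the infimum/supremum is not attained; but since $\|(x,u)-(\bar x,\bar u)\|\le b$ holds for the nearest point whenever $\min_{(\bar x,\bar u)}\|(x,u)-(\bar x,\bar u)\|\le b$, and the latter holds for every $(x,u)\in S_{xu}$ by definition of $b$ as the supremum of these minima, the bound goes through without needing attainment of the outer supremum. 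The Lipschitz constant $L_{E_{A,B}}$ is assumed given and valid, so no estimation of it is needed here.
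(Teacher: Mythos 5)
Your proof is correct and follows essentially the same route as the paper's: decompose $E_{A,B}(x,u)-c^*$ via the triangle inequality against the nearest data point, bound the first term by $L_{E_{A,B}}b$ using the dispersion, and the second by $e(A,B)$ using the definition of $c^*$. The extra care you take about attainment of the minimum and the pairing of $(\bar x,\bar u)$ with its successor $\bar x^+$ is sound but not needed beyond what the paper already implicitly uses.
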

\begin{proof} Following \cite{knuth2021planning}, for any $(\bar{x}, \bar{u})\in\mathcal{D}_{xu}$ and $(x,u)\in S_{xu}$, one can write
\begin{align*}
&\|E_{A,B}(x,u) -c^*\| \\
 &\leq \|E_{A,B}(x,u) - E_{A,B}(\bar{x},\bar{u}) \| + \| E_{A,B}(\bar{x},\bar{u})-c^*\| \\
 &\leq L_{E_{A,B}} \|(x,u)-(\bar{x},\bar{u})\| + \| \psi(\bar{x}^+) - A\psi(\bar{x})-B\bar{u}-c^*\|,
\end{align*}
where the last inequality follows from the definition of Lipschitz constant and the definition of $\bar{x}^+$. Finally, by definition of $b,e$ and $\epsilon$, we conclude that $\| E_{A,B}(x,u)-c^*\|\leq \epsilon$, for all $(x,u)\in S_{xu}$. 
\end{proof}

Given a data set $\mathcal{D}$, Theorem \ref{thm:global_oaka} provides a way to find a Koopman over-approximation over a domain $S_{xu}$. First, solve the following convex optimization problem to obtain the system parameters:
\begin{align} \label{eq:global}
    A,B=\arg\min_{A,B} \; e(A,B).
\end{align}
Second, find (an over-estimation of) the Lipschitz constant $L_{E_{A,B}}$ (see Remark \ref{remark:EVT} below). Third, Theorem \ref{thm:global_oaka} provides a $W$ which leads to a Koopman over-approximation.

\begin{remark}\label{remark:EVT}
While it is not possible to obtain deterministic Lipschitz constant bounds from finite amount of data, estimation of Lipschitz constants is an active research area \cite{wood1996estimation, weng2018evaluating}. For instance, given $A$, $B$, $\psi$ and a data set $\mathcal{D}$, if $\mathcal{D}_{xu}$ is independent and identically distributed, statistical methods such as scenario approach or extreme value theory can be used to obtain an estimate \cite{knuth2021planning, 9691930}. The same statistical techniques can also be applied to directly estimate a bound on $E_{A,B}$ from data. 
In the next subsection, we show how a given valid estimate $L_{E_{A,B}}$ can be used to locally calibrate $A$, $B$ in an efficient way.
\end{remark}

\subsection{Computing local Koopman over-approximations}
\label{subsec:local_koop_rep}

The error bound for globally estimated $A$, $B$ matrices can be large, possibly resulting in conservative BRS inner approximations.
For this reason, we want to establish a smaller error bound by adapting the globally estimated $\Sigma_{lin}$ parameters to smaller localities. We rely on the fact that for matrices $\tilde{A}$ and $\tilde{B}$, the function $\Delta_{\tilde{A},\tilde{B}}$ defined by
\begin{align*}
\Delta_{\tilde{A},\tilde{B}}(x,u) &= E_{\tilde{A},\tilde{B}}(x,u) -E_{A,B}(x,u)\\
&= (A-\tilde{A})\psi(x) + (B-\tilde{B})u,
\end{align*}
is independent of the nonlinear dynamics $f$.
The following theorem is useful for developing a method to compute a local Koopman over-approximation over a subset of $S_{xu}$.

\begin{theorem} \label{thm:local_oaka} 
Given a subdomain $\tilde{S}\subseteq S_{xu}$, a data set $\mathcal{D}$ with dispersion $b$ in $S_{xu}$, a lifting function $\psi$, and matrices $A$, $B$, $\tilde{A}$, $\tilde{B}$ of compatible dimensions, consider the Lipschitz constants $L_{\Delta_{\tilde{A},\tilde{B}}}$ and $L_{E_{A,B}}$. Define $\tilde{\mathcal{D}}\subseteq\mathcal{D}$ such that $\mathcal{\tilde{D}}_{xu}=\mathcal{D}_{xu}\cap\left(\tilde{S}\oplus \mathcal{B}(0,b)\right)$, and
$$
\tilde{e}(\tilde{A},\tilde{B})=\min_{\tilde{c}} \max_{(\bar{x},\bar{u},\bar{x}^+)\in\tilde{\mathcal{D}}}
\|\psi(\bar{x}^+) - \tilde{A}\psi(\bar{x})-\tilde{B}\bar{u}-\tilde{c}\|,
$$
with $\tilde{c}^*$ as the minimizer. Let $\tilde{\epsilon}=( L_{\Delta_{\tilde{A},\tilde{B}}} + L_{E_{A,B}}) b+ \tilde{e}(\tilde{A},\tilde{B})$. Then the linear system $\tilde{\Sigma}_{lin}$ defined by $\tilde{A}$, $\tilde{B}$, and $\tilde{W}=\mathcal{B}(\tilde{c}^*,\tilde{\epsilon})$ is a Koopman over-approximation of $\Sigma$ over the subdomain $\tilde{S}$.
\end{theorem}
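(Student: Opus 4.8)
\emph{Proof proposal.} The plan is to follow the proof of Theorem~\ref{thm:global_oaka} almost verbatim, with the one genuinely new ingredient being the choice of the local data subset $\tilde{\mathcal{D}}$. The goal is to show that $\|E_{\tilde{A},\tilde{B}}(x,u)-\tilde{c}^*\|\le\tilde{\epsilon}$ for every $(x,u)\in\tilde{S}$, which is exactly condition \eqref{eqn:over-approx} for $\tilde{\Sigma}_{lin}$ with $\tilde{W}=\mathcal{B}(\tilde{c}^*,\tilde{\epsilon})$ over $\tilde{S}$.

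First I would record the elementary decomposition $E_{\tilde{A},\tilde{B}}=E_{A,B}+\Delta_{\tilde{A},\tilde{B}}$, which is immediate from the definitions; since a Lipschitz constant of a sum is bounded by the sum of the Lipschitz constants, this yields $L_{E_{\tilde{A},\tilde{B}}}\le L_{E_{A,B}}+L_{\Delta_{\tilde{A},\tilde{B}}}$. This is the only place the global constant $L_{E_{A,B}}$ enters, and it is precisely what lets us avoid re-estimating, on the subdomain, a Lipschitz constant of the error that depends on the unknown dynamics. Next comes the covering step, which is the crux. Fix $(x,u)\in\tilde{S}\subseteq S_{xu}$; by the dispersion bound for $\mathcal{D}_{xu}$ in $S_{xu}$ there is a sample $(\bar{x},\bar{u})\in\mathcal{D}_{xu}$ with $\|(x,u)-(\bar{x},\bar{u})\|\le b$, hence $(\bar{x},\bar{u})\in\tilde{S}\oplus\mathcal{B}(0,b)$ and therefore $(\bar{x},\bar{u})\in\tilde{\mathcal{D}}_{xu}$. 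The point is that restricting the data naively to $\tilde{S}$ would not guarantee any nearby sample inside $\tilde{S}$, which is exactly why $\tilde{\mathcal{D}}_{xu}$ is defined by intersecting $\mathcal{D}_{xu}$ with the inflated set $\tilde{S}\oplus\mathcal{B}(0,b)$. With such a sample in hand I would chain the triangle inequality as in Theorem~\ref{thm:global_oaka}: $\|E_{\tilde{A},\tilde{B}}(x,u)-\tilde{c}^*\|\le\|E_{\tilde{A},\tilde{B}}(x,u)-E_{\tilde{A},\tilde{B}}(\bar{x},\bar{u})\|+\|E_{\tilde{A},\tilde{B}}(\bar{x},\bar{u})-\tilde{c}^*\|$, bound the first term by $(L_{E_{A,B}}+L_{\Delta_{\tilde{A},\tilde{B}}})\,b$ using the Lipschitz estimate and $\|(x,u)-(\bar{x},\bar{u})\|\le b$, and bound the second term by $\tilde{e}(\tilde{A},\tilde{B})$ using $\bar{x}^+=f(\bar{x},\bar{u})$ (so that $E_{\tilde{A},\tilde{B}}(\bar{x},\bar{u})=\psi(\bar{x}^+)-\tilde{A}\psi(\bar{x})-\tilde{B}\bar{u}$), the fact that $\tilde{c}^*$ minimizes the defining expression of $\tilde{e}$, and that $(\bar{x},\bar{u},\bar{x}^+)\in\tilde{\mathcal{D}}$ is among the points over which the maximum is taken. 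Summing the two bounds gives $\tilde{\epsilon}$, and since $(x,u)\in\tilde{S}$ was arbitrary, the claim follows from Definition~\ref{def:lifting}.

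I do not expect a real obstacle here; the only delicate part is the set-membership bookkeeping in the covering step --- verifying that the sample supplied by the \emph{global} dispersion actually lands in $\tilde{\mathcal{D}}_{xu}$ --- and that delicacy is already anticipated by the inflation-by-$\mathcal{B}(0,b)$ in the statement. It is also worth noting in passing that $\tilde{\mathcal{D}}\subseteq\mathcal{D}$ causes no harm, since the maximum over a subset is no larger, so $\tilde{e}(\tilde{A},\tilde{B})$ is well defined (and typically much smaller than its global counterpart, which is the whole point of localizing).
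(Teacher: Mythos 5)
Your proposal is correct and follows essentially the same route as the paper's proof: the same covering argument showing that the global dispersion $b$ supplies a nearby sample landing in $\tilde{\mathcal{D}}_{xu}$ (via the inflation by $\mathcal{B}(0,b)$), and the same triangle-inequality chain splitting $E_{\tilde{A},\tilde{B}}$ into $E_{A,B}+\Delta_{\tilde{A},\tilde{B}}$ to bound the Lipschitz term by $(L_{E_{A,B}}+L_{\Delta_{\tilde{A},\tilde{B}}})b$. Bundling the two Lipschitz constants upfront rather than carrying the decomposition through the inequalities, as the paper does, is only a cosmetic difference.
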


\begin{proof}
First, let us show that $\tilde{\mathcal{D}}_{xu}$ has dispersion at most $b$ in $\tilde{S}$. If $b$ is finite, then $S$ is bounded and so is $\tilde{S}$. Let $(x,u)$ be in the closure of $\tilde{S}$ and let $(\bar{x},\bar{u})$ be its closest neighbor in $\mathcal{D}_{xu}$. By definition of $b$, $\| (x,u) - (\bar{x},\bar{u}) \|\leq b$. But then, $(\bar{x},\bar{u})\in \tilde{S}\oplus\mathcal{B}(0,b)$ and finally $(\bar{x},\bar{u})\in\tilde{\mathcal{D}}_{xu}$, which proves the claim.

Then, for any $(x,u)\in \tilde{S}$ and $(\bar{x},\bar{u})\in \tilde{S}\cap\mathcal{D}$, one can write
\begin{align*}
&\| E_{\tilde{A},\tilde{B}}(x,u) - \tilde{c}^* \| \\
&= \| E_{\tilde{A},\tilde{B}}(x,u) - E_{\tilde{A},\tilde{B}}(\bar{x},\bar{u}) + E_{\tilde{A},\tilde{B}}(\bar{x},\bar{u}) - \tilde{c}^*\| \\
&\leq \| E_{\tilde{A},\tilde{B}}(x,u) - E_{\tilde{A},\tilde{B}}(\bar{x},\bar{u}) \| + \| E_{\tilde{A},\tilde{B}}(\bar{x},\bar{u}) - \tilde{c}^* \| \\
&= \| E_{\tilde{A},\tilde{B}}(x,u) - E_{A,B}(x,u) +  E_{A,B}(\bar{x},\bar{u}) \\
&\indent - E_{\tilde{A},\tilde{B}}(\bar{x},\bar{u}) + E_{A,B}(x,u) - E_{A,B}(\bar{x},\bar{u}) \| \\
&\indent+ \| E_{\tilde{A},\tilde{B}}(\bar{x},\bar{u}) - \tilde{c}^* \|\\
&\leq \|\Delta_{\tilde{A},\tilde{B}}(x,u) - \Delta_{\tilde{A},\tilde{B}}(\bar{x},\bar{u}) \| \\
&\indent+ \| E_{A,B}(x,u) - E_{A,B}(\bar{x},\bar{u}) \| + \|E_{\tilde{A},\tilde{B}}(\bar{x},\bar{u}) - \tilde{c}^* \| \\
&\leq ( L_{\Delta_{\tilde{A},\tilde{B}}} + L_{E_{A,B}} ) \|(x,u)-(\bar{x},\bar{u})\| \\
&\indent+ \| E_{\tilde{A},\tilde{B}}(\bar{x},\bar{u}) - \tilde{c}^* \|.
\end{align*}
Since $\tilde{\mathcal{D}}_{xu}$ has dispersion at most $b$ in $\tilde{S}$, then for all $(x,u)\in \tilde{S}$, $\| E_{\tilde{A},\tilde{B}}(x,u) -c^*\| \leq \tilde{\epsilon}$, which concludes the proof.
\end{proof}

Theorem \ref{thm:local_oaka} can be used to find $\tilde{A}$, $\tilde{B}$ and $\tilde{c}^*$ over a subdomain $\tilde{S}$ by solving the following optimization problem

\begin{align} 
\tilde{A},\tilde{B},\tilde{c}^*=\arg & \min_{\tilde{A},\tilde{B},\tilde{c}} \Big\{ \left( L_{\Delta_{\tilde{A},\tilde{B}}} + L_{E_{A,B}} \right) b \notag \\
&+ \max_{(\bar{x},\bar{u},\bar{x}^+)\in\tilde{\mathcal{D}}}
\|\psi(\bar{x}^+) - \tilde{A}\psi(\bar{x})-\tilde{B}\bar{u}-\tilde{c}\| \Big\}. \label{eq:local_true}
\end{align}

In general the dependence on $L_{\Delta_{\tilde{A},\tilde{B}}}$ makes the optimization problem \eqref{eq:local_true} hard to solve. However, the following theorem states that if $S_{xu}$ is equipped with the infinity norm and if for each dimension $i=1,\dots,p$ a Lipschitz constant $L_{E^i_{A_i,B_i}}$ of $E^i_{A_i,B_i}(x,u)=\psi^i(f(x,u))-A_i\psi(x)-B_iu$ is known, then a local Koopman over-approximation can be found by solving $p$ linear programs.

\begin{theorem}\label{thm:local_oaka:inf}
Given a subdomain $\tilde{S}\subseteq S_{xu}$ equipped with the infinity norm, a data set $\mathcal{D}$ with dispersion $b$ in $S_{xu}$, a lifting function $\psi$, and matrices $A$, $B$, $\tilde{A}$, $\tilde{B}$ of compatible dimensions, consider the Lipschitz constants $L_\psi$ and $L_{E^i_{A_i,B_i}}$ for $i=1,\dots,p$. Define $\tilde{\mathcal{D}}\subseteq\mathcal{D}$ such that $\tilde{\mathcal{D}}_{x,u}=\mathcal{D}_{xu}\cap\left( \tilde{S}\oplus\mathcal{B}(0,b) \right)$, and for $i=1,\dots,p$, define
$$
\tilde{e}_i(\tilde{A}_i,\tilde{B}_i)=\min_{\tilde{c}_i} \max_{(\bar{x},\bar{u},\bar{x}^+)\in\tilde{\mathcal{D}}}
|\psi^i(\bar{x}^+) - \tilde{A}_i\psi(\bar{x})-\tilde{B}_i\bar{u}-\tilde{c}_i|,
$$
with $\tilde{c}^*_i$ as the minimizer. Let
\begin{align*}
\bar{\epsilon}_i &= \left( \|A_i^\top-\tilde{A}_i^\top\|_1 L_\psi + \|B_i^\top-\tilde{B}_i^\top\|_1 + L_{E^i_{A_i,B_i}} \right)b \\
&\indent + \tilde{e}_i(\tilde{A}_i,\tilde{B}_i).
\end{align*}
Then the linear system $\bar{\Sigma}_{lin}$ defined by $\tilde{A}$, $\tilde{B}$, $\bar{W}=\tilde{c}^*\oplus\bigtimes_{i=1}^p[-\bar{\epsilon}_i,\bar{\epsilon}_i]$ is a Koopman over-approximations of $\Sigma$ over the subdomain $\tilde{S}$.
\end{theorem}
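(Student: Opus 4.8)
The plan is to reduce Theorem~\ref{thm:local_oaka:inf} to a componentwise version of the argument in Theorem~\ref{thm:local_oaka}, exploiting that the infinity norm decouples across coordinates. The key observation is that $E_{\tilde A,\tilde B}(x,u)\in\bar W$ iff $|E^i_{\tilde A_i,\tilde B_i}(x,u)-\tilde c^*_i|\le\bar\epsilon_i$ for every $i=1,\dots,p$, so it suffices to bound each scalar error function $E^i_{\tilde A_i,\tilde B_i}$ separately over $\tilde S$. Then, for a fixed coordinate $i$, I would repeat the triangle-inequality/Lipschitz chain of Theorem~\ref{thm:local_oaka}, but applied to the scalar quantities: for any $(x,u)\in\tilde S$ and any $(\bar x,\bar u)\in\tilde{\mathcal D}$,
\begin{align*}
|E^i_{\tilde A_i,\tilde B_i}(x,u)-\tilde c^*_i|
&\le |E^i_{\tilde A_i,\tilde B_i}(x,u)-E^i_{\tilde A_i,\tilde B_i}(\bar x,\bar u)|
 + |E^i_{\tilde A_i,\tilde B_i}(\bar x,\bar u)-\tilde c^*_i| \\
&\le |\Delta^i_{\tilde A_i,\tilde B_i}(x,u)-\Delta^i_{\tilde A_i,\tilde B_i}(\bar x,\bar u)|
 + |E^i_{A_i,B_i}(x,u)-E^i_{A_i,B_i}(\bar x,\bar u)| \\
&\indent + |E^i_{\tilde A_i,\tilde B_i}(\bar x,\bar u)-\tilde c^*_i|,
\end{align*}
where I split $E^i_{\tilde A_i,\tilde B_i}=E^i_{A_i,B_i}+\Delta^i_{\tilde A_i,\tilde B_i}$ exactly as before, with $\Delta^i_{\tilde A_i,\tilde B_i}(x,u)=(A_i-\tilde A_i)\psi(x)+(B_i-\tilde B_i)u$.

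The new ingredient, and the only real content beyond Theorem~\ref{thm:local_oaka}, is bounding the Lipschitz constant of the scalar function $\Delta^i_{\tilde A_i,\tilde B_i}$ explicitly in terms of known quantities rather than carrying an abstract $L_{\Delta^i}$. Since $\Delta^i_{\tilde A_i,\tilde B_i}$ is affine, I would estimate, for $(x,u)$ and $(\bar x,\bar u)$ in $S_{xu}$ equipped with the infinity norm,
\begin{align*}
|\Delta^i_{\tilde A_i,\tilde B_i}(x,u)-\Delta^i_{\tilde A_i,\tilde B_i}(\bar x,\bar u)|
&= |(A_i-\tilde A_i)(\psi(x)-\psi(\bar x)) + (B_i-\tilde B_i)(u-\bar u)| \\
&\le \|A_i^\top-\tilde A_i^\top\|_1\,\|\psi(x)-\psi(\bar x)\|_\infty + \|B_i^\top-\tilde B_i^\top\|_1\,\|u-\bar u\|_\infty \\
&\le \big(\|A_i^\top-\tilde A_i^\top\|_1 L_\psi + \|B_i^\top-\tilde B_i^\top\|_1\big)\,\|(x,u)-(\bar x,\bar u)\|_\infty,
\end{align*}
using Hölder's inequality $|a^\top v|\le\|a\|_1\|v\|_\infty$ for the two linear terms, the Lipschitz bound $\|\psi(x)-\psi(\bar x)\|_\infty\le L_\psi\|x-\bar x\|_\infty\le L_\psi\|(x,u)-(\bar x,\bar u)\|_\infty$, and $\|u-\bar u\|_\infty\le\|(x,u)-(\bar x,\bar u)\|_\infty$. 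Together with $|E^i_{A_i,B_i}(x,u)-E^i_{A_i,B_i}(\bar x,\bar u)|\le L_{E^i_{A_i,B_i}}\|(x,u)-(\bar x,\bar u)\|_\infty$, the first two terms above are bounded by $\big(\|A_i^\top-\tilde A_i^\top\|_1 L_\psi + \|B_i^\top-\tilde B_i^\top\|_1 + L_{E^i_{A_i,B_i}}\big)\|(x,u)-(\bar x,\bar u)\|_\infty$.

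To close the argument I would choose, for each $(x,u)\in\tilde S$, the nearest data point $(\bar x,\bar u)\in\mathcal D_{xu}$; as established in the proof of Theorem~\ref{thm:local_oaka} (the dispersion-restriction lemma, which carries over verbatim since it only concerns the geometry of $S_{xu}$, $\tilde S$, and $\mathcal D_{xu}$), this neighbor lies in $\tilde S\oplus\mathcal B(0,b)$, hence in $\tilde{\mathcal D}_{xu}$, and satisfies $\|(x,u)-(\bar x,\bar u)\|_\infty\le b$. The last term $|E^i_{\tilde A_i,\tilde B_i}(\bar x,\bar u)-\tilde c^*_i|=|\psi^i(\bar x^+)-\tilde A_i\psi(\bar x)-\tilde B_i\bar u-\tilde c^*_i|$ is at most $\tilde e_i(\tilde A_i,\tilde B_i)$ by definition of $\tilde c^*_i$ and the max over $\tilde{\mathcal D}$. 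Summing gives $|E^i_{\tilde A_i,\tilde B_i}(x,u)-\tilde c^*_i|\le\bar\epsilon_i$ for all $(x,u)\in\tilde S$ and all $i$, i.e., $E_{\tilde A,\tilde B}(x,u)\in\tilde c^*\oplus\bigtimes_{i=1}^p[-\bar\epsilon_i,\bar\epsilon_i]=\bar W$, which is precisely the Koopman over-approximation condition of Definition~\ref{def:lifting}. I do not anticipate a genuine obstacle here; the only place needing care is the bookkeeping of dual norms — matching $\|A_i^\top-\tilde A_i^\top\|_1$ (the $\ell_1$ norm of the $i$-th row viewed as a vector) with the $\ell_\infty$ bound on $\psi(x)-\psi(\bar x)$, and making sure the stated $L_\psi$ is the Lipschitz constant of $\psi$ with respect to the infinity norm on both ends.
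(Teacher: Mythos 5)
Your proposal is correct and follows essentially the same route as the paper: reuse the triangle-inequality/dispersion argument of Theorem~\ref{thm:local_oaka} componentwise, and then bound $L_{\Delta^i_{\tilde{A}_i,\tilde{B}_i}}$ explicitly by $\|A_i^\top-\tilde{A}_i^\top\|_1 L_\psi + \|B_i^\top-\tilde{B}_i^\top\|_1$ via the $\ell_1$/$\ell_\infty$ H\"older pairing. The paper compresses the first step into ``following the same reasoning as the proof of Theorem~\ref{thm:local_oaka}'' while you spell it out, but the content, including the dual-norm bookkeeping you flag as the delicate point, is identical.
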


\begin{proof}
Following the same reasoning as the proof of Theorem \ref{thm:local_oaka}, for all $(x,u)\in \tilde{S}$,
\begin{equation}\label{eq:proof:local:E}
|E^i_{\tilde{A},\tilde{B}}(x,u) - c^*_i | \leq ( L_{\Delta^i_{\tilde{A}_i,\tilde{B}_i}} + L_{E^i_{A_i,B_i}}) b + \tilde{e}_i(\tilde{A}_i,\tilde{B}_i).
\end{equation}
We are going to show that
\begin{equation}\label{eq:proof:local:L}
L_{\Delta^i_{\tilde{A}_i,\tilde{B}_i}}\leq \|A_i^\top-\tilde{A}_i^\top\|_1 L_\psi + \|B_i-\tilde{B}_i\|_1.
\end{equation}
Indeed, for all $(x^1,u^1),(x^2,u^2)\in \tilde{S}$, using the triangular inequality, the definition of Lipschitz constant and the definition of infinity norms, one can write
\begin{align*}
&|\Delta^i_{\tilde{A}_i,\tilde{B}_i}(x^1,u^1)-\Delta^i_{\tilde{A}_i,\tilde{B}_i}(x^2,u^2)| \\
&=|(A_i-\tilde{A}_i)\left(\psi(x^1)-\psi(x^2)\right) + (B_i-\tilde{B}_i)(u^1-u^2)| \\
&\leq |(A_i-\tilde{A}_i)\left(\psi(x^1)-\psi(x^2)\right)| + |(B_i-\tilde{B}_i)(u^1-u^2)| \\
&\leq \|A_i^\top-\tilde{A}_i^\top\|_1 L_\psi\|x^1-x^2\|_\infty + \|B_i^\top-\tilde{B}_i^\top\|_1\|u^1-u^2\|_\infty \\
&\leq \left( \|A_i^\top-\tilde{A}_i^\top\|_1 L_\psi + \|B_i^\top-\tilde{B}_i^\top\|_1 \right) \| (x^1-x^2, u^1-u^2) \|_\infty,
\end{align*}
which proves relation \eqref{eq:proof:local:L}. Combining \eqref{eq:proof:local:E} and \eqref{eq:proof:local:L} with the definition of $\bar{\epsilon}_i$, we have $|E^i_{\tilde{A},\tilde{B}}(x,u) - \tilde{c}^*_i | \leq \bar{\epsilon}_i$ for all $(x,u)\in \tilde{S}$, which concludes the proof.
\end{proof}

An important consequence of Theorem \ref{thm:local_oaka:inf} is that a Koopman over-approximation can be found by solving the following linear programs: For $i=1,\dots,p$,
\begin{align}
\tilde{A}_i,\tilde{B}_i, \tilde{c}^*_i=&\arg\min_{\tilde{A}_i,\tilde{B}_i,\tilde{c}_i} \Big\{ \big( \|A_i^\top-\tilde{A}_i^\top\|_1 L_\psi + \|B_i^\top-\tilde{B}_i^\top\|_1 + \notag \\
&L_{E^i_{A_i,B_i}} \big) b \notag \\
&+\max_{(\bar{x},\bar{u},\bar{x}^+)\in\tilde{\mathcal{D}}} |\psi^i(\bar{x}^+) - \tilde{A}_i\psi(\bar{x})-\tilde{B}_i\bar{u}-\tilde{c}_i| 
\Big\}. \label{eq:local_oaka:AB}
\end{align}

\section{Results and Discussion}
\label{sec:results}

In this section, our method is illustrated by computing the BRS of two nonlinear dynamical systems, the forced Duffing oscillator and the inverted pendulum, and compared with the HJB method on these examples. Our code that generates the figures and implements our algorithm is available at: \url{https://github.com/haldunbalim/KoopmanBRS}. The reported computation times are obtained with a laptop with a Quad-Core Intel i7 CPU and 16 GB of RAM.

\subsection{Forced Duffing oscillator}
We consider forced Duffing oscillator \cite{Guckenheimer2002} with dynamics
\begin{equation*}
    \mathbf{\dot x} = \begin{bmatrix}y\\ 2x -2x^3 -0.5y + u\end{bmatrix},
\end{equation*}
discretized using Runge-Kutta 4(5) scheme with a $0.025$ time step. The state is $\mathbf{x} = \begin{bmatrix}x, y\end{bmatrix}^\top \in [-0.5, 0.5] \times [-1.5, 1.5]$ and the input is $u \in \begin{bmatrix}-5, 5\end{bmatrix}$. The target set is $X_T = [-0.1, 0.1]\times[-0.1, 0.1]$ and the considered lifting function is $\psi(\mathbf{x}) = [x, y, x^3]^\top$. 

A Koopman over-approximation is estimated from 1000 random tuples $(\mathbf{x},u,\mathbf{x}^+)$ according to \eqref{eq:global}. Then the error $E_{A,B}$ is bounded component-wise via the extreme value theory using 200 samples (see Remark \ref{remark:EVT}). We validated the fit of an extreme value distribution with significance 0.05 using the Kolmogorov-Smirnov (KS) goodness-of-fit test~\cite{DeGroot2002} with 50 samples. The estimated error bound is $\epsilon=[0.0005, 0.0004, 0.0133]^\top$.

We use our method to compute a 10-step BRS. The results are presented in Figure~\ref{fig:duffing-brs} with a comparison with BRS obtained by the Hamilton-Jacobi (HJB) method~\cite{hjb}. Note that unlike our method, HJB requires knowledge of the dynamics and can be seen as a ground truth modulo numerical inaccuracies in PDE solutions. The HJB method took $74.41$ s, while our method ran in $0.41$ s given the Koopman over-approximation.

\begin{figure}[t]
\vspace{-0.35cm}
\centerline{\includegraphics[width=0.8\columnwidth]{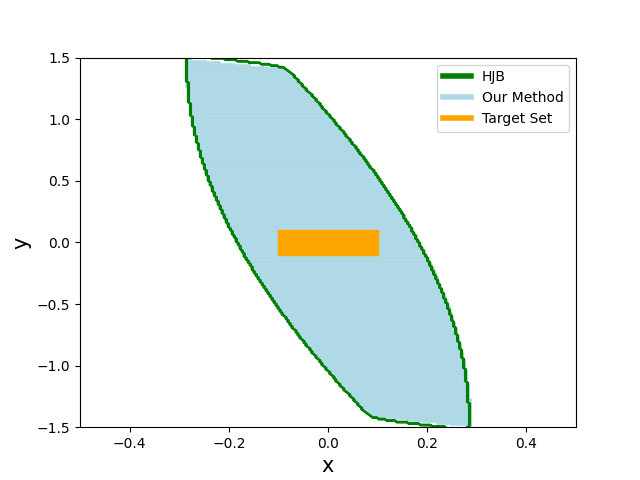}}
\caption{Backward reachable sets for the Duffing oscillator.}
\label{fig:duffing-brs}\vspace{-0.4cm}
\end{figure}

\vspace{-0.2cm}
\subsection{Inverted pendulum}

\begin{figure}[t] 
\centerline{\includegraphics[width=0.8\columnwidth]{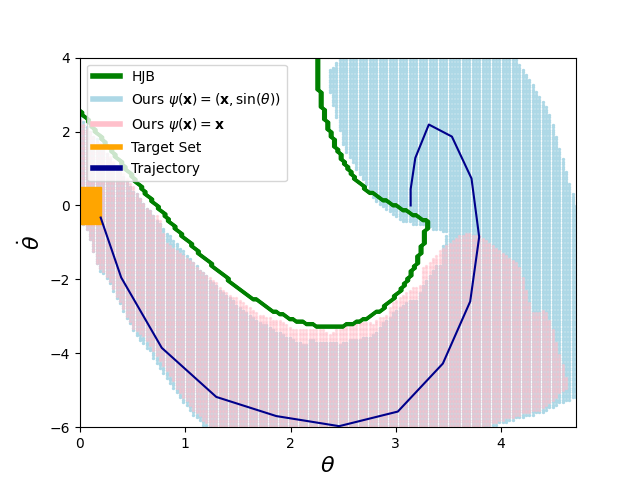}}
\caption{Backward reachable sets for the inverted pendulum.}
\label{fig:pendulum}\vspace{-0.35cm}
\end{figure}

Consider the dynamics of an inverted pendulum:
\begin{equation*}
    \mathbf{\dot x} = \begin{bmatrix} \dot \theta\\ \frac{3g}{2l}\sin(\theta) +\frac{3}{ml^2}u\end{bmatrix}.
\end{equation*}
It is discretized with time step 0.1 using explicit Euler scheme. The state is $\mathbf{x} = [\theta, \dot{\theta}]^\top \in S_x=[0,\frac{3\pi}{2}]\times[-6, 4]$, and the input is $u\in S_u=[-0.35, 0.35]$. The parameters $(m, l, g)$ are set to $(0.1, 1, 10)$ respectively. The target set is $[0, 0.2] \times [-0.5, 0.5]$ and the lifting function considered is $\psi(\mathbf{x}) = [\theta, \dot{\theta}, \sin(\theta)]^\top$.

Matrices $A$, $B$ are computed using equation \eqref{eq:global} from a data set of approximately $300,000$ samples created via grid-sampling with $0.04$ unit separation in the state dimensions and $0.08$ unit separation in the action dimension.
The component-wise Lipschitz constant $L_{E^i_{A,B}}$ of the error function is estimated using extreme value theory \cite{wood1996estimation} with $40,000$ samples for fitting and $10,000$ samples for KS test with a significance value of 0.05. We obtain $L_{E_{A,B}} = [0, 0, 0.737]^\top$.

For each local Koopman over-approximation, matrix $B$ is kept constant, while matrix $A$ is adapted using Theorem \ref{thm:local_oaka}. Since we are only changing $A$ matrix, $L_{\Delta_{\tilde{A},\tilde{B}}}$ is replaced with $L_{\Delta_{\tilde{A}}}$ which only depends on states. Therefore, we only need to multiply this term by dispersion $b_x$ computed over states.

To choose the subdomains for a given target set, we filter tuples in our dataset for which the next state falls into the target set. Then, the subdomain is chosen as a rotated bounding box around the states of these tuples using \cite{pca_box}. If the 1-norm of the radii of $\bar W$ for a subdomain is greater than $0.18$, the corresponding target set is split into two halves, through its Chebyshev center along the axis its bounding box has the greatest span. This is done to create smaller subdomains where local Koopman over-approximations found by equation \eqref{eq:local_oaka:AB} yield smaller errors. 

In Figure~\ref{fig:pendulum}, we present a comparison of the BRS computed using our method, a variant where $\psi(x)$ is the identity function, and the BRS computed using the HJB method. All three methods are used to compute BRS for a horizon of $15$ and the computation times for these approaches were $245.4$ s, $952.6$ s, and $85.4$ s, respectively. Our results indicate that the inclusion of a lifting dimension led to a significant increase in computational speed (thanks to a smaller number of subdomains being sufficient for achieving small error) and generated sets with larger volume. Finally, we show a trajectory starting from $\mathbf{x} = [\pi, 0]^\top$ that is steered to the target set using control inputs extracted according to Theorem~\ref{thm:control}. The pendulum initially moves counter-clockwise to accumulate energy before proceeding towards the target set. This intricate maneuver demonstrates the effectiveness of our methodology in tackling the problem at hand.

\section{Conclusion and Future Directions}
\label{sec:conclusion}

Inspired by Koopman operator theory, in this paper, we have introduced Koopman over-approximations for discrete-time nonlinear systems. These over-approximations allow us to use linear system backward reachability tools to compute implicit BRSs of nonlinear systems. Crucially, control inputs steering the system to the target set can be easily extracted from these implicit BRSs at run-time. We have also presented computational approaches to construct Koopman over-approximations from data when the underlying nonlinear system is unknown. Finally, we have discussed a local version of these over-approximations, which, in a sense, generalizes the hybridization approaches, such as \cite{girard2011synthesis}, in correct-by-construction control literature using lifting.

While our theoretical results hold for any given Lipschitz lifting function, a limitation of our work is the lack of a method for choosing the lifting function, which affects the size of the computed BRS. Our future work will aim to overcome this limitation by investigating how to obtain lifting functions incrementally in a way to yield monotonically better performance. From an algorithmic point of view, we plan to replace the polytopic reachability tools used in this paper to compute the BRS of the Koopman over-approximation with more efficient zonotopic ones \cite{liren_cz} to improve scalability. Finally, it is also interesting to generalize the results to handle uncertainty in dynamics and measurements.

\bibliographystyle{IEEEtran}
\bibliography{bibliography}

\end{document}